\newtheorem{theorem}{Theorem}[section]
\newtheorem{lemma}[theorem]{Lemma}
\newtheorem{proposition}[theorem]{Proposition}
\newtheorem{Definition}[theorem]{Definition}
\newtheorem{Example}[theorem]{Example}
\newtheorem{Remark}[theorem]{Remark}
\newenvironment{remark}{\begin{Remark}\begin{em}}{\end{em}\end{Remark}}
\newenvironment{definition}{\begin{Definition}\begin{em}}{\end{em}\end{Definition}}
\newcommand{\B}{\mathbf{B}}
\newcommand{\R}{{\mathbb R}}
\newcommand{\D}{{\mathbb D}}
\newcommand{\uu}{\mathbf{u}}
\newcommand{\vv}{\mathbf{v}}
\newcommand{\ww}{\mathbf{w}}
\newcommand{\0}{\mathbf{0}}
\DeclareMathOperator{\gyr}{gyr}
\DeclareMathOperator{\tr}{tr}
\address{Department of Mathematics,
Louisiana State University, Baton Rouge, LA 70803, U.S.A.}
\begin{document}

\author{Sejong Kim}

\title{Distances of qubit density matrices on Bloch sphere}

% \date{}

\maketitle

We recall the Einstein velocity addition on the open unit ball $\B$
of $\R^{3}$ and its algebraic structure, called the Einstein
gyrogroup. We establish an isomorphism between the Einstein
gyrogroup on $\B$ and the set of all qubit density matrices
representing mixed states endowed with an appropriate addition. Our
main result establishes a relation between the trace metric for the
qubit density matrices and the rapidity metric on the Einstein
gyrogroup $\B$.

\vspace{5mm}

\noindent {\bf PACS} (2010):  03.65.Fd, 03.67.-a

\noindent {\bf Keywords}: Einstein addition, gyrogroup, qubit,
rapidity metric, trace metric

\section{Introduction}

Einstein addition is the standard velocity addition of
relativistically admissible vectors that Einstein introduced in his
1905 paper that founded the special theory of relativity. In his
book \cite{Ein} the presentation of Einstein's special theory of
relativity is based on Einstein velocity addition law. It also
allows the reader to take the properties of the underlying
hyperbolic geometry.

A. A. Ungar has introduced and studied in \cite{Un08} gyrogroups
that are algebraic settings of hyperbolic geometry, and suggested
three examples of gyrogroups corresponding to three models of
hyperbolic geometry. It has been known that gyrogroups correspond
equivalently to loop structures; see \cite{SSS}. In Section 2 we
review the Einstein gyrogroup on the open unit ball of the Euclidean
three-dimensional space $\R^{3}$ corresponding to the Beltrami-Klein
ball model of hyperbolic geometry.

Bloch vectors in the unit ball of $\R^{3}$ are well-known in quantum
information and computation theory. A qubit density matrix, a
$2$-by-$2$ positive semidefinite Hermitian matrix with trace $1$, is
completely described by the Bloch vectors. Chen and Ungar have
computed the Bures fidelity of qubit density matrices generated by
two Bloch vectors and showed some equivalent formulas in their
papers \cite{CU1} and \cite{CU2}. The Bures fidelity that plays an
important role for the geometry of quantum state space has been
manipulated into a form possessing distance properties, called the
Bures metric. On the other hand, we apply the trace metric to the
space of all invertible qubit density matrices, investigate its
properties, and show in Section 5 the relation between the rapidity
metric of the Einstein gyrogroup on $\B$ and the trace metric of the
qubit density matrices.

\section{Einstein addition and gyrogroups}

The velocities in Einstein's theory of special relativity are
three-dimensional vectors with magnitude bounded by the speed of
light. We assume the speed of light is normalized by the value 1,
and call such velocities admissible vectors. The relativistic sum of
two admissible vectors $\uu$ and $\vv$ of norm less than 1 is given
by
\begin{equation} \label{E:Einplus}
\uu \oplus \vv = \frac{1}{1 + \uu^{T} \vv} \left\{ \uu +
\frac{1}{\gamma_{\uu}} \vv + \frac{\gamma_{\uu}}{1 + \gamma_{\uu}}
(\uu^{T} \vv) \uu \right\},
\end{equation}
where $\gamma_{\uu}$ is the well-known \emph{Lorentz factor}
\begin{equation} \label{E:gamma}
\displaystyle \gamma_{\uu} = \frac{1}{\sqrt{1 - \Vert \uu \Vert^2}}.
\end{equation}
Note that $\uu^{T} \vv$ is just the Euclidean inner product of $\uu$
and $\vv$ written in matrix form.

\begin{definition}
The formula (\ref{E:Einplus}) defines a binary operation called
Einstein addition on the open unit ball $\B = \{ \uu : \Vert \uu
\Vert < 1 \}$ of $\R^{3}$.
\end{definition}

The Einstein addition can be naturally defined on the open unit ball
$\B$ of $n$-dimensional space $\R^{n}$. See \cite[Chapter 1]{FS05}
for a derivation of the Einstein addition law.

\begin{remark} \label{R:Loboost}
The Einstein vector addition on the open unit ball $\B$ of $\R^{n}$
can also be defined by applying the Lorentz boost
\begin{displaymath}
B(\uu) = \left(
\begin{array}{cc}
  \gamma_{\uu} & \gamma_{\uu} \uu^{T} \\
  \gamma_{\uu} \uu & I + \frac{\gamma_{\uu}^{2}}{1 + \gamma_{\uu}} \uu \uu^{T} \\
\end{array}
\right)
\end{displaymath}
to $\left(
\begin{array}{c}
  1 \\
  \vv \\
\end{array}
\right)$ and obtaining
\begin{displaymath}
B(\uu) \left(
\begin{array}{c}
  1 \\
  \vv \\
\end{array}
\right) = \left(
\begin{array}{c}
  t \\
  t (\uu \oplus \vv) \\
\end{array}
\right),
\end{displaymath}
where $t = \gamma_{\uu}(1 + \uu^{T} \vv)$.
\end{remark}

If two vectors $\uu$ and $\vv$ are parallel, that is, there exists a
nonzero constant $\lambda$ such that $\vv = \lambda \uu$, then
\begin{displaymath}
\displaystyle \uu \oplus \vv = \frac{\uu + \vv}{1 + \uu^{T} \vv}.
\end{displaymath}

To capture abstractly Einstein addition of velocities in special
relativity, A. A. Ungar has introduced and studied in several papers
and books structures that he has called gyrogroups; see \cite{Un08}
and its bibliography. His axioms are reminiscent of those for a
group, but gyrogroup operations are nonassociative in general.

\begin{definition} \label{D:gyrogroup}
A triple $(G, \oplus, 0)$ is a \emph{gyrogroup} if the following
axioms are satisfied for all $a, b, c \in G$.
\begin{itemize}
\item[(G1)] $0 \oplus a = a \oplus 0 = a$ (existence of identity);
\item[(G2)] $a \oplus (-a) = (-a) \oplus a = 0$ (existence of inverses);
\item[(G3)] There is an automorphism $\gyr[a,b] : G \to G$ for each $a, b\in G$ such that
 \begin{center}
 $a \oplus (b \oplus c) = (a \oplus b) \oplus \gyr[a,b]c$ (gyroassociativity);
 \end{center}
\item[(G4)] $\gyr[0,a] =$ id;
\item[(G5)] $\gyr[a \oplus b, b] = \gyr[a,b]$ (loop property).
\end{itemize}
A gyrogroup $(G, \oplus)$ is \emph{gyrocommutative} if it satisfies
 \begin{center}
 $a \oplus b = \gyr[a,b](b \oplus a)$ (gyrocommutativity).
 \end{center}
A gyrogroup is \emph{uniquely $2$-divisible} if for every $b \in G$,
there exists a unique $a \in G$ such that $a \oplus a = b$.
\end{definition}

We call $\gyr[a,b]$ the gyroautomorphism or Thomas gyration
generated by $a$ and $b$.

\begin{remark} \label{R:gyroloop}
It has been shown that gyrocommutative gyrogroups are equivalent to
Bruck loops \cite{SSS}, i.e, a loop is a Bruck loop if and only if
it is a gyrocommutative gyrogroup with respect to the same
operation. It follows that uniquely $2$-divisible gyrocommutative
gyrogroups are equivalent to $B$-loops, uniquely 2-divisible Bruck
loops.
\end{remark}

A. A. Ungar has shown in \cite[Chapter 3]{Un08} by computer algebra
that Einstein addition on the ball $\B$ is a gyrocommutative
gyrogroup operation, and the gyroautomorphisms are orthogonal
transformations preserving the Euclidean inner product and the
inherited norm. We call $(\B, \oplus)$ the Einstein
(gyrocommutative) gyrogroup, where $\oplus$ is defined by
(\ref{E:Einplus}).

In references \cite{Un08}, \cite{CU1}, and \cite{CU2} the Einstein
vector addition is mostly defined on the open unit ball $\B$ of
$\R^{n}$. It turns our interest to extending the Einstein addition
on the closed unit ball $\overline{\B}$. Substituting $\displaystyle
\alpha_{\uu} = \frac{1}{\gamma_{\uu}}$ in the definition
(\ref{E:Einplus}) we have an alternative expression of the Einstein
vector addition
\begin{displaymath}
\displaystyle \uu \oplus \vv = \frac{1}{1 + \uu^{T} \vv} \left\{ \uu
+ \alpha_{\uu} \vv + \frac{1}{1 + \alpha_{\uu}} (\uu^{T} \vv) \uu
\right\}.
\end{displaymath}
This is well defined for all $(\uu, \vv) \in \overline{\B} \times
\overline{\B}$ excluding the antipodal vectors. Since $\uu \oplus
\vv = \uu$ for any $\vv \in \B$ if $\Vert \uu \Vert = 1$, we are
able to define the Einstein addition naturally for the antipodal
vectors by continuity such that
\begin{displaymath}
\uu \oplus \vv = \uu
\end{displaymath}
for any $\vv \in \overline{\B}$ whenever $\Vert \uu \Vert = 1$.
Hence, the extended Einstein addition of $\uu$ and $\vv$ in
$\overline{\B}$ can be defined as
\begin{displaymath}
\uu \oplus \vv =
\begin{cases}
    (\ref{E:Einplus}) & \hbox{if $\uu \in \B$} \\
    \uu & \hbox{if $\Vert \uu \Vert = 1$} \\
\end{cases}
\end{displaymath}

\begin{remark}
The closed unit ball $\overline{\B}$ with respect to the extended
Einstein addition is a binary system, but not a gyrogroup since it
has no unique inverse.
\end{remark}

\section{Bloch vectors and density matrices}

A qubit is a member of a 2-dimensional Hilbert space, or a two-state
quantum system. A qubit density matrix is a $2$-by-$2$ positive
semidefinite Hermitian matrix with trace $1$. Indeed, any $2$-by-$2$
Hermitian matrix of trace $1$ must have a parametrization of the
form
\begin{displaymath}
\displaystyle \rho_{\vv} = \frac{1}{2} \left(
\begin{array}{cc}
  1 + v_{3} & v_{1} - i v_{2} \\
  v_{1} + i v_{2} & 1 - v_{3} \\
\end{array}
\right),
\end{displaymath}
where
\begin{displaymath}
\vv = \left(
\begin{array}{c}
  v_{1} \\
  v_{2} \\
  v_{3} \\
\end{array}
\right) \in \R^{3}.
\end{displaymath}
So the qubit density matrix can be described as $\rho_{\vv}$ for
some $\vv \in \R^{3}$ such that $\| \vv \| \leq 1$. In this case the
vector $\vv$ is known as the Bloch vector or Bloch vector
representation of the state $\rho_{\vv}$.

\begin{remark} \label{R:eigqubit}
Via the characteristic polynomial of the qubit $\rho_{\vv}$, we
obtain that its eigenvalues are
\begin{displaymath}
\displaystyle \frac{1 + \Vert \vv \Vert}{2}, \ \frac{1 - \Vert \vv
\Vert}{2},
\end{displaymath}
and its determinant is
\begin{displaymath}
\displaystyle \det{\rho_{\vv}} = \frac{1 - \Vert \vv \Vert^{2}}{4}.
\end{displaymath}
So the mixed states are parameterized by the open unit ball $\B$ in
$\R^{3}$.
\end{remark}

The Pauli matrices are given by
\begin{displaymath}
I = \left(
\begin{array}{cc}
  1 & 0 \\
  0 & 1 \\
\end{array}
\right), \sigma_{x} = \left(
\begin{array}{cc}
  0 & 1 \\
  1 & 0 \\
\end{array}
\right), \sigma_{y} = \left(
\begin{array}{cc}
  0 & -i \\
  i & 0 \\
\end{array}
\right), \sigma_{z} = \left(
\begin{array}{cc}
  1 & 0 \\
  0 & -1 \\
\end{array}
\right),
\end{displaymath}
where $i = \sqrt{-1}$. The parameterization of qubit density
matrices can be written in an alternative method using the Pauli
matrices:
\begin{displaymath}
\displaystyle \rho_{\vv} = \frac{1}{2} (I + \vv^{T} \sigma),
\end{displaymath}
where $\vv^{T} \sigma = v_{1} \sigma_{x} + v_{2} \sigma_{y} + v_{3}
\sigma_{z}$ for
\begin{displaymath}
\vv = \left(
\begin{array}{c}
  v_{1} \\
  v_{2} \\
  v_{3} \\
\end{array}
\right), \ \sigma = \left(
\begin{array}{c}
  \sigma_{x} \\
  \sigma_{y} \\
  \sigma_{z} \\
\end{array}
\right).
\end{displaymath}

Let $\D = \{ \rho_{\uu} : \uu \in \B \}$ be the set of all qubit
density matrices representing the mixed states. We define a binary
map $\odot : \D \times \D \to \D$ as
\begin{displaymath}
\rho_{\uu} \odot \rho_{\vv} =
    \displaystyle \frac{1}{\tr{ \left( \rho_{\uu} \star \rho_{\vv} \right) }} \rho_{\uu} \star \rho_{\vv}
\end{displaymath}
where $\displaystyle \rho_{\uu} \star \rho_{\vv} = \rho_{\uu}^{1/2}
\rho_{\vv} \rho_{\uu}^{1/2}$.

\begin{remark} \label{R:Trprod}
From the fact that $\tr{A B} = \tr{B A}$ for any matrices $A$ and
$B$, we have \cite[Eq. (9.67)]{Un08}
\begin{displaymath}
\displaystyle \tr{ \left( \rho_{\uu} \star \rho_{\vv} \right) } =
\tr{\rho_{\uu} \rho_{\vv}} = \frac{1 + \uu^{T} \vv}{2}.
\end{displaymath}
This gives us that the binary map $\odot$ is well defined since $1 +
\uu^{T} \vv \neq 0$ whenever $\uu, \vv \in \B$.
\end{remark}

Since every element in $\D$ is a Hermitian positive definite matrix,
it has a unique square root. We have the explicit form and can prove
it via the direct matrix computation of squaring the form
(\ref{Eq:Squbit}).

\begin{lemma} \label{L:Squbit}
For any $\vv \in \B$ the square root of the qubit density matrix
$\rho_{\vv}$ is uniquely given by
\begin{equation} \label{Eq:Squbit}
\displaystyle \rho_{\vv}^{1/2} = \sqrt{\frac{\gamma_{\vv}}{1 +
\gamma_{\vv}}} \left( \rho_{\vv} + \frac{1}{2 \gamma_{\vv}} I
\right).
\end{equation}
\end{lemma}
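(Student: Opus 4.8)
The plan is to exploit the remark made just before the statement: for $\vv \in \B$ the matrix $\rho_{\vv}$ is positive definite Hermitian, hence it has a \emph{unique} positive definite square root, so it suffices to check that the matrix on the right-hand side of (\ref{Eq:Squbit}) is positive definite Hermitian and that its square equals $\rho_{\vv}$. Write $c = \sqrt{\gamma_{\vv}/(1+\gamma_{\vv})} > 0$ and $X = \rho_{\vv} + \tfrac{1}{2\gamma_{\vv}} I$, so that the assertion is $(cX)^{2} = \rho_{\vv}$.

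For the squaring step, the key observation is that $\rho_{\vv}$ satisfies a quadratic relation. From Remark \ref{R:eigqubit} we have $\tr \rho_{\vv} = 1$ and $\det \rho_{\vv} = (1 - \Vert \vv \Vert^{2})/4$, so the Cayley--Hamilton theorem yields
\begin{equation*}
\rho_{\vv}^{2} = \rho_{\vv} - \frac{1 - \Vert \vv \Vert^{2}}{4}\, I .
\end{equation*}
(Equivalently, one expands $\rho_{\vv}^{2} = \tfrac14 (I + \vv^{T}\sigma)^{2}$ and uses $(\vv^{T}\sigma)^{2} = \Vert \vv \Vert^{2} I$.) Expanding $X^{2} = \rho_{\vv}^{2} + \tfrac{1}{\gamma_{\vv}} \rho_{\vv} + \tfrac{1}{4\gamma_{\vv}^{2}} I$ and substituting, the coefficient of $I$ becomes $\tfrac{1}{4\gamma_{\vv}^{2}} - \tfrac{1 - \Vert \vv \Vert^{2}}{4}$, which vanishes because $\gamma_{\vv}^{-2} = 1 - \Vert \vv \Vert^{2}$ by (\ref{E:gamma}). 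Hence $X^{2} = (1 + \gamma_{\vv}^{-1}) \rho_{\vv} = \tfrac{1+\gamma_{\vv}}{\gamma_{\vv}} \rho_{\vv}$, and therefore $(cX)^{2} = c^{2} X^{2} = \tfrac{\gamma_{\vv}}{1+\gamma_{\vv}} \cdot \tfrac{1+\gamma_{\vv}}{\gamma_{\vv}}\, \rho_{\vv} = \rho_{\vv}$.

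It then remains to note that $cX$ is positive definite Hermitian: it is Hermitian as a positive real combination of the Hermitian matrices $\rho_{\vv}$ and $I$, and since $\rho_{\vv}$ is positive semidefinite with eigenvalues $\tfrac{1 \pm \Vert \vv \Vert}{2}$ while $\tfrac{1}{2\gamma_{\vv}} > 0$, the matrix $cX$ has strictly positive eigenvalues $c\big(\tfrac{1 \pm \Vert \vv \Vert}{2} + \tfrac{1}{2\gamma_{\vv}}\big)$. By uniqueness of the positive definite square root, $cX = \rho_{\vv}^{1/2}$, which is precisely (\ref{Eq:Squbit}).

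I do not expect a genuine obstacle here; the only point requiring care is the bookkeeping with the Lorentz factor, namely the identity $\gamma_{\vv}^{-2} = 1 - \Vert \vv \Vert^{2}$ that forces the $I$-term to cancel. Alternatively, as the sentence preceding the lemma suggests, one can avoid Cayley--Hamilton altogether and simply square the explicit $2 \times 2$ matrix $\rho_{\vv}$ entrywise, though the intrinsic computation above is shorter and makes the cancellation transparent.
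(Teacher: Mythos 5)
Your proof is correct and follows the same strategy the paper intends: the paper gives no written proof, only the remark that one verifies (\ref{Eq:Squbit}) by ``direct matrix computation of squaring'' and then invokes uniqueness of the positive definite square root, which is exactly your outline. The only difference is that you carry out the squaring via the Cayley--Hamilton relation $\rho_{\vv}^{2} = \rho_{\vv} - \det(\rho_{\vv})\, I$ instead of entrywise, which makes the cancellation of the $I$-term through $\gamma_{\vv}^{-2} = 1 - \Vert \vv \Vert^{2}$ transparent, and you explicitly check that the candidate is Hermitian positive definite, which properly closes the uniqueness step.
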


We now see an isomorphism between the open unit ball $\B$ with the
Einstein velocity addition $\oplus$ and the binary system $(\D,
\odot)$.

\begin{theorem} \label{T:iso}
The map $\rho : (\B, \oplus) \to (\D, \odot), \ \vv \mapsto
\rho_{\vv}$ is an isomorphism.
\end{theorem}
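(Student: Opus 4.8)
The plan is to prove separately that $\rho$ is a bijection and that it is a homomorphism from $(\B,\oplus)$ to $(\D,\odot)$, essentially all the content being in the second part. Bijectivity is immediate: the map $\vv\mapsto\rho_{\vv}$ is injective because the entries of $\rho_{\vv}$ recover $v_{1},v_{2},v_{3}$, and it is onto $\D$ by the very definition of $\D$. For the homomorphism property one must show $\rho_{\uu\oplus\vv}=\rho_{\uu}\odot\rho_{\vv}$ for all $\uu,\vv\in\B$. Since $\tr(\rho_{\uu}\star\rho_{\vv})=\tfrac{1+\uu^{T}\vv}{2}$ by Remark \ref{R:Trprod}, and $\rho_{\uu}\odot\rho_{\vv}=\tr(\rho_{\uu}\star\rho_{\vv})^{-1}\,\rho_{\uu}^{1/2}\rho_{\vv}\rho_{\uu}^{1/2}$ by definition, it is enough to establish the matrix identity
\begin{equation*}
\rho_{\uu}^{1/2}\,\rho_{\vv}\,\rho_{\uu}^{1/2}=\frac{1+\uu^{T}\vv}{2}\,\rho_{\uu\oplus\vv}.
\end{equation*}
Dividing through by $\tr(\rho_{\uu}\star\rho_{\vv})$ then yields $\rho_{\uu}\odot\rho_{\vv}=\rho_{\uu\oplus\vv}$ exactly.

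To prove this identity I would pass to the Pauli form. Setting $\alpha_{\uu}=1/\gamma_{\uu}$ and using $\rho_{\uu}=\tfrac12(I+\uu^{T}\sigma)$, Lemma \ref{L:Squbit} rewrites the square root as $\rho_{\uu}^{1/2}=\tfrac12\sqrt{\gamma_{\uu}/(1+\gamma_{\uu})}\,\bigl((1+\alpha_{\uu})I+\uu^{T}\sigma\bigr)$, while $\rho_{\vv}=\tfrac12(I+\vv^{T}\sigma)$, so that
\begin{equation*}
\rho_{\uu}^{1/2}\rho_{\vv}\rho_{\uu}^{1/2}=\frac{1}{8}\cdot\frac{\gamma_{\uu}}{1+\gamma_{\uu}}\,\bigl((1+\alpha_{\uu})I+\uu^{T}\sigma\bigr)(I+\vv^{T}\sigma)\bigl((1+\alpha_{\uu})I+\uu^{T}\sigma\bigr).
\end{equation*}
The computational engine is the standard Pauli product rule $(\uu^{T}\sigma)(\vv^{T}\sigma)=(\uu^{T}\vv)I+i(\uu\times\vv)^{T}\sigma$, extended bilinearly to complex vectors; applying it twice turns the triple product into $pI+\ww^{T}\sigma$ for a scalar $p$ and a complex vector $\ww$, after which one handles the prefactor $\tfrac18\cdot\tfrac{\gamma_{\uu}}{1+\gamma_{\uu}}$ separately.

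It then remains to simplify $p$ and $\ww$ using $\alpha_{\uu}^{2}=1-\|\uu\|^{2}$ together with the vector identities $\uu^{T}(\vv\times\uu)=0$ and $\uu\times(\vv\times\uu)=\|\uu\|^{2}\vv-(\uu^{T}\vv)\uu$. One finds $p=2(1+\alpha_{\uu})(1+\uu^{T}\vv)$; the imaginary part of $\ww$ cancels because $\uu\times\vv=-\vv\times\uu$, and the real part reduces to $2(1+\alpha_{\uu})$ times $\uu+\alpha_{\uu}\vv+\tfrac{1}{1+\alpha_{\uu}}(\uu^{T}\vv)\uu$, which by the $\alpha_{\uu}$-form of Einstein addition recorded in Section 2 equals $2(1+\alpha_{\uu})(1+\uu^{T}\vv)(\uu\oplus\vv)$. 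Hence $pI+\ww^{T}\sigma=2(1+\alpha_{\uu})(1+\uu^{T}\vv)\bigl(I+(\uu\oplus\vv)^{T}\sigma\bigr)=4(1+\alpha_{\uu})(1+\uu^{T}\vv)\,\rho_{\uu\oplus\vv}$, and since $\gamma_{\uu}(1+\alpha_{\uu})=\gamma_{\uu}+1$ the prefactor $\tfrac18\cdot\tfrac{\gamma_{\uu}}{1+\gamma_{\uu}}$ collapses the constant to $\tfrac{1+\uu^{T}\vv}{2}$, which is the desired identity.

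The only real obstacle is bookkeeping: one must carry the complex cross-product terms through the double expansion and, in particular, not discard the identity-matrix contributions $(\uu^{T}\ww)I$ that the product rule generates. A marginally shorter route, avoiding some of this, is to observe that both sides are Hermitian and to compare them by taking the trace (which merely recovers the scalar identity of Remark \ref{R:Trprod}) and the three numbers $\tr(\sigma_{j}\,\rho_{\uu}^{1/2}\rho_{\vv}\rho_{\uu}^{1/2})$ for $j\in\{x,y,z\}$, which read off the Bloch vector of the left-hand side; alternatively one may invoke the $\mathrm{SL}(2,\C)$--Lorentz correspondence together with Remark \ref{R:Loboost}. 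In every case the decisive point is the vanishing of the imaginary part of $\ww$, which is precisely what guarantees that $\rho_{\uu}^{1/2}\rho_{\vv}\rho_{\uu}^{1/2}$ is, up to the scalar factor, an honest qubit density matrix with a real Bloch vector.
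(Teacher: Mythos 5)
Your proposal is correct, and it reaches the identity $\rho_{\uu}^{1/2}\rho_{\vv}\rho_{\uu}^{1/2}=\tfrac{1+\uu^{T}\vv}{2}\,\rho_{\uu\oplus\vv}$ by a genuinely different computation than the paper. The paper works entrywise: it writes out $\rho_{\uu}^{1/2}$ explicitly from Lemma \ref{L:Squbit} with $x=\gamma_{\uu}/(1+\gamma_{\uu})$, multiplies the three $2\times 2$ matrices, and matches the $(1,1)$ and $(1,2)$ entries of $\rho_{\uu}\odot\rho_{\vv}$ against the coordinates of $\uu\oplus\vv$, invoking Hermiticity to skip the remaining entries. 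You instead stay in the Pauli algebra, writing $\rho_{\uu}^{1/2}$ as a scalar multiple of $(1+\alpha_{\uu})I+\uu^{T}\sigma$ and expanding the triple product with the rule $(\uu^{T}\sigma)(\vv^{T}\sigma)=(\uu^{T}\vv)I+i(\uu\times\vv)^{T}\sigma$. I checked the resulting coefficients: with $c=1+\alpha_{\uu}$ one gets $cI$-coefficient $c^{2}+2c(\uu^{T}\vv)+\Vert\uu\Vert^{2}=2c(1+\uu^{T}\vv)$ and $\sigma$-coefficient $2(c+\uu^{T}\vv)\uu+2\alpha_{\uu}c\vv$, which is exactly $2c$ times the $\alpha_{\uu}$-form of $(1+\uu^{T}\vv)(\uu\oplus\vv)$ from Section 2, and $\gamma_{\uu}(1+\alpha_{\uu})=1+\gamma_{\uu}$ collapses the prefactor as you say. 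Your route buys a basis-free calculation in which the Hermiticity of the result is not an auxiliary shortcut but a visible consequence of the cancellation $\uu\times\vv=-\vv\times\uu$, and it makes the emergence of the Einstein addition formula structurally transparent; the paper's route is more elementary (no Pauli product rule or vector triple-product identities needed) at the cost of heavier entrywise bookkeeping. Both correctly reduce the theorem to the same two ingredients, Lemma \ref{L:Squbit} and Remark \ref{R:Trprod}.
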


\begin{proof}
Obviously the map $\rho$ is a bijection. We need to show that
$\rho_{\uu \oplus \vv} = \rho_{\uu} \odot \rho_{\vv}$ for any
\begin{displaymath}
\uu = \left(
\begin{array}{ccc}
  u_{1} & u_{2} & u_{3} \\
\end{array}
\right)^{T}, \vv = \left(
\begin{array}{ccc}
  v_{1} & v_{2} & v_{3} \\
\end{array}
\right)^{T} \in \B.
\end{displaymath}
We set $\displaystyle x := \frac{\gamma_{\uu}}{1 + \gamma_{\uu}}$.
By Lemma \ref{L:Squbit} we have
\begin{displaymath}
\displaystyle \rho_{\uu}^{1/2} = \frac{1}{2} \sqrt{x} \left(
\begin{array}{cc}
  u_{3} + \frac{1}{x} & u_{1} - i u_{2} \\
  u_{1} + i u_{2} & - u_{3} + \frac{1}{x} \\
\end{array}
\right).
\end{displaymath}
It is enough to check the (1,1) and (1,2) entries of $\rho_{\uu}
\odot \rho_{\vv}$ since $\rho_{\uu} \odot \rho_{\vv}$ is Hermitian.

Let us first compute the (1,1) entry of $\rho_{\uu} \star
\rho_{\vv}$. Then
\begin{displaymath}
\begin{split}
\displaystyle & \frac{x}{8} \left\{ \left( u_{3} + \frac{1}{x} \right)(1 + v_{3}) + (u_{1} - i u_{2})(v_{1} + i v_{2}) \right\} \left( u_{3} + \frac{1}{x} \right) \\
& \ \ \ \ \ + \frac{x}{8} \left\{ \left( u_{3} + \frac{1}{x} \right)(v_{1} - i v_{2}) + (u_{1} - i u_{2})(1 - v_{3}) \right\} (u_{1} + i u_{2}) \\
& = \frac{x}{8} \left\{ \left( u_{3} + \frac{1}{x} \right)^{2} (1 + v_{3}) + 2 \left( u_{3} + \frac{1}{x} \right)(u_{1}v_{1} + u_{2}v_{2}) + (u_{1}^{2} + u_{2}^{2})(1 - v_{3}) \right\} \\
& = \frac{x}{8} \left\{ (\Vert \mathbf{u} \Vert^{2} + 2(\uu^{T} \vv)u_{3} - \Vert \uu \Vert^{2} v_{3}) + \frac{2}{x} u_{3} + \frac{2}{x}(\uu^{T} \vv) + \frac{1}{x^{2}} (1 + v_{3}) \right\} \\
& = \frac{x}{8} \Vert \uu \Vert^{2} (1 - v_{3}) + \frac{x}{4} (\uu^{T} \vv) u_{3} + \frac{1}{4} u_{3} + \frac{1}{4} (\uu^{T} \vv) + \frac{1}{8x} (1 + v_{3}) \\
& = \frac{\gamma_{\uu} - 1}{8 \gamma_{\uu}} (1 - v_{3}) + \frac{\gamma_{\uu}}{4(1 + \gamma_{\uu})} (\uu^{T} \vv) u_{3} + \frac{1}{4} u_{3} + \frac{1}{4} (\uu^{T} \vv) + \frac{1 + \gamma_{\uu}}{8 \gamma_{\uu}} (1 + v_{3}) \\
& = \frac{1}{4} \left\{ (1 + \uu^{T} \vv) + u_{3} +
\frac{1}{\gamma_{\uu}} v_{3} + \frac{\gamma_{\uu}}{1 + \gamma_{\uu}}
(\uu^{T} \vv) u_{3} \right\}.
\end{split}
\end{displaymath}
By Remark \ref{R:Trprod} the (1,1) entry of $\rho_{\uu} \odot
\rho_{\vv}$ is
\begin{displaymath}
\begin{split}
& \left( \frac{2}{1 + \uu^{T} \vv} \right) \frac{1}{4} \left\{ (1 + \uu^{T} \vv) + u_{3} + \frac{1}{\gamma_{\uu}} v_{3} + \frac{\gamma_{\uu}}{1 + \gamma_{\uu}} (\uu^{T} \vv) u_{3} \right\} \\
& = \frac{1}{2} \left\{ 1 + \frac{1}{1 + \uu^{T} \vv} \left( u_{3} + \frac{1}{\gamma_{\uu}} v_{3} + \frac{\gamma_{\uu}}{1 + \gamma_{\uu}} (\uu^{T} \vv) u_{3} \right) \right\} \\
& = \frac{1}{2} (1 + (\uu \oplus \vv)_{3}),
\end{split}
\end{displaymath}
where $(\uu \oplus \vv)_{j}$ represents the $j$ th coordinate of
Einstein vector addition $\uu \oplus \vv$.

Similarly, let us compute the (1,2) entry of $\rho_{\uu} \star
\rho_{\vv}$. Then we have
\begin{displaymath}
\begin{split}
& \frac{x}{8} \left\{ \left( u_{3} + \frac{1}{x} \right)(1 + v_{3}) + (u_{1} - i u_{2})(v_{1} + i v_{2}) \right\} (u_{1} - i u_{2}) \\
& \ \ \ \ \ + \frac{x}{8} \left\{ \left( u_{3} + \frac{1}{x} \right)(v_{1} - i v_{2}) + (u_{1} - i u_{2})(1 - v_{3}) \right\} \left( - u_{3} + \frac{1}{x} \right) \\
\end{split}
\end{displaymath}
The real part of the (1,2) entry is
\begin{displaymath}
\begin{split}
& \frac{x}{8} \left\{ \left( u_{3} + \frac{1}{x} \right)(1 + v_{3})u_{1} + (u_{1}^{2} - u_{2}^{2})v_{1} + 2u_{1}u_{2}v_{2} \right\} \\
& \ \ \ \ \ + \frac{x}{8} \left\{ \left( u_{3} + \frac{1}{x} \right) v_{1} \left( - u_{3} + \frac{1}{x} \right) + u_{1}(1 - v_{3}) \left( - u_{3} + \frac{1}{x} \right) \right\} \\
& = \frac{x}{8} \left\{ 2 u_{1} \left( u_{3}v_{3} + \frac{1}{x} \right) + (u_{1}^{2} - u_{2}^{2} - u_{3}^{2}) v_{1} + 2 u_{1}u_{2}v_{2} + \frac{1}{x^{2}} v_{1} \right\} \\
& = \frac{x}{8} \left\{ \frac{2}{x} u_{1} + 2 (\uu^{T} \vv) u_{1} - \Vert \uu \Vert^{2} v_{1} + \frac{1}{x^{2}} v_{1} \right\} \\
& = \frac{1}{4} \left( u_{1} + \frac{1}{\gamma_{\uu}} v_{1} +
\frac{\gamma_{\uu}}{1 + \gamma_{\uu}} (\uu^{T} \vv) u_{1} \right),
\end{split}
\end{displaymath}
and the imaginary part is
\begin{displaymath}
\begin{split}
& \frac{x}{8} \left\{ - \left( u_{3} + \frac{1}{x} \right)(1 + v_{3})u_{2} + (u_{1}^{2} - u_{2}^{2})v_{2} - 2u_{1}u_{2}v_{1} \right\} \\
& \ \ \ \ \ - \frac{x}{8} \left\{ \left( u_{3} + \frac{1}{x} \right) v_{2} \left( - u_{3} + \frac{1}{x} \right) - u_{2}(1 - v_{3}) \left( - u_{3} + \frac{1}{x} \right) \right\} \\
& = \frac{x}{8} \left\{ - 2 u_{2} \left( u_{3}v_{3} + \frac{1}{x} \right) + (u_{1}^{2} - u_{2}^{2} + u_{3}^{2}) v_{2} - 2 u_{1}u_{2}v_{1} - \frac{1}{x^{2}} v_{2} \right\} \\
& = \frac{x}{8} \left\{ - \frac{2}{x} u_{2} - 2 (\uu^{T} \vv) u_{2} + \Vert \uu \Vert^{2} v_{2} - \frac{1}{x^{2}} v_{2} \right\} \\
& = - \frac{1}{4} \left( u_{2} + \frac{1}{\gamma_{\uu}} v_{2} +
\frac{\gamma_{\uu}}{1 + \gamma_{\uu}} (\uu^{T} \vv) u_{2} \right).
\end{split}
\end{displaymath}
By Remark \ref{R:Trprod} the (1,2) entry of $\rho_{\uu} \odot
\rho_{\vv}$ is
\begin{displaymath}
\left( \frac{2}{1 + \uu^{T} \vv} \right) \frac{1 + \uu^{T} \vv}{4}
((\uu \oplus \vv)_{1} - i (\uu \oplus \vv)_{2}) = \frac{1}{2} ((\uu
\oplus \vv)_{1} - i (\uu \oplus \vv)_{2}).
\end{displaymath}
Therefore, we conclude that \cite[Eq. (9.23)]{Un08},
\begin{displaymath}
\displaystyle \rho_{\uu} \odot \rho_{\vv} = \frac{1}{2} \left(
\begin{array}{cc}
  1 + (\uu \oplus \vv)_{3} & (\uu \oplus \vv)_{1} - i (\uu \oplus \vv)_{2} \\
  (\uu \oplus \vv)_{1} + i (\uu \oplus \vv)_{2} & 1 - (\uu \oplus \vv)_{3} \\
\end{array}
\right) = \rho_{\uu \oplus \vv}
\end{displaymath}
for any $\uu, \vv \in \B$. This means the map $\rho$ is a
homomorphism.
\end{proof}

\begin{remark} \label{R:Dgyro}
From Theorem \ref{T:iso}, we obtain that $\D$ is a gyrocommutative
gyrogroup with respect to the operation $\odot$ defined by
\begin{displaymath}
\displaystyle \rho_{\uu} \odot \rho_{\vv} = \frac{1}{\tr{ \left(
\rho_{\uu}^{1/2} \rho_{\vv} \rho_{\uu}^{1/2} \right) }}
\rho_{\uu}^{1/2} \rho_{\vv} \rho_{\uu}^{1/2}.
\end{displaymath}
On $\D$, moreover, the identity is $\displaystyle (1/2) I$ and the
inverse for $\rho_{\uu}$ is
\begin{equation} \label{E:rho}
\displaystyle \rho_{- \uu} = \frac{1}{4 \gamma_{\uu}}
\rho_{\uu}^{-1}.
\end{equation}
\end{remark}

\section{The rapidity metric}

The Einstein gyrogroup on the open unit ball $\B$ admits an analytic
scalar multiplication given by
\begin{equation} \label{Eq:scalar}
(t,\uu) \mapsto t \cdot \uu = \tanh(t \tanh^{-1} \Vert \uu
\Vert)(\uu/\Vert \uu \Vert)
\end{equation}
for $\uu \neq \0$, and $\0$ for $t = 0$ or $\uu = \0$ (see
\cite[Chapter 6]{Un08}). Expressing the magnitude of the velocity
parameter $\uu$ by the hyperbolic parameter $\phi_{\uu}$ called the
\emph{rapidity} of $\uu \in \B$,
\begin{displaymath}
\phi_{\uu} = \tanh^{-1} \Vert \uu \Vert,
\end{displaymath}
we have $t \cdot \uu = \tanh(t \phi_\uu)(\uu/\Vert \uu \Vert)$, or
$\phi_{t \cdot \uu} = t \phi_{\uu}$.

For the Einstein gyrogroup $(\B, \oplus, \0)$, A. A. Ungar considers
what we call the \emph{Ungar gyrometric} defined by
$\varrho(\uu,\vv) = \Vert -\uu \oplus \vv \Vert$. He also defines
what we call the rapidity metric by $d(\uu,\vv) = \tanh^{-1}
\varrho(\uu,\vv)$. It is known as the Cayley-Klein metric on the
Beltrami-Klein model of hyperbolic geometry (see \cite{FS05}), or
the Bergman metric on the symmetric structure $\B$ with symmetries
$S_{\ww}(\vv) = \uu \oplus (- \vv)$ for some $\uu = 2 \cdot \ww$
(see \cite{Zhu}). For real numbers $s, t$, we define
\begin{displaymath}
\displaystyle s \oplus t = \frac{s + t}{1 + st},
\end{displaymath}
the restricted Einstein addition analogous to the Einstein sum of
parallel vectors. We see some properties of rapidity metric on $\B$.

\begin{lemma} \label{L:metric}
The following properties hold for all $\uu,\vv,\ww \in \B$.
\begin{itemize}
\item[(i)] $0 \leq \varrho(\uu,\vv), d(\uu,\vv)$
\item[(ii)] $\varrho(\uu,\vv)=0 \Leftrightarrow d(\uu,\vv)=0 \Leftrightarrow \uu=\vv$
\item[(iii)] $\varrho(\uu,\vv)=\varrho(\vv,\uu)$, $d(\uu,\vv)=d(\vv,\uu)$
\item[(iv)] $\Vert \uu \oplus \vv \Vert \leq \Vert \uu \Vert \oplus \Vert \vv \Vert \Leftrightarrow \varrho(\uu,\ww) \leq \varrho(\uu,\vv) \oplus \varrho(\vv,\ww)
\Leftrightarrow d(\uu,\ww) \leq d(\uu,\vv) + d(\vv,\ww)$
\item[(v)] $\varrho(\uu \oplus \vv,\uu \oplus \ww) = \varrho(\vv,\ww)$ and $d(\uu \oplus \vv,\uu \oplus \ww) = d(\vv,\ww)$
\item[(vi)] $d(\0,r \cdot \ww) = \vert r \vert d(\0,\ww)$.
\end{itemize}
\end{lemma}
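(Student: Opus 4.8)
The plan is to prove the six items roughly in the order (i)–(iii), then (v), then (iv), then (vi), since the gyro-translation invariance in (v) is the workhorse that makes the triangle inequality in (iv) and the homogeneity in (vi) tractable. Throughout write $\varrho(\uu,\vv)=\Vert-\uu\oplus\vv\Vert$ and $d=\tanh^{-1}\circ\varrho$; since $\tanh^{-1}$ is a strictly increasing bijection from $[0,1)$ onto $[0,\infty)$, every inequality or equality for $\varrho$ (with values in $[0,1)$) transfers automatically to $d$, and conversely. This observation handles the ``$\Leftrightarrow$'' chains in (ii) and (iv) for free, and reduces (i) to the trivial fact that norms are nonnegative.

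For (ii), the key point is that $-\uu\oplus\vv=\0$ iff $\vv=\uu$; this is immediate from gyrogroup axioms (G1)–(G2): $-\uu\oplus\vv=\0$ gives $\vv=\uu\oplus(-\uu\oplus\vv)=\uu\oplus\0=\uu$, using the left-cancellation property that follows from (G3)–(G4) (namely $\uu\oplus(-\uu\oplus\vv)=\vv$, which is the standard ``left loop'' identity). For (iii), symmetry of $\varrho$ follows from gyrocommutativity together with the fact that the gyroautomorphisms are isometries of the Euclidean norm: $\Vert-\uu\oplus\vv\Vert$ and $\Vert-\vv\oplus\uu\Vert$ differ by the orthogonal map $\gyr[-\uu,\vv]$ (one checks $-\uu\oplus\vv=\gyr[-\uu,\vv](-(-\vv\oplus\uu))$ using gyrocommutativity and the identities $-(a\oplus b)=\gyr[a,b](-b\oplus -a)$ and $\gyr[-a,a]=\mathrm{id}$), hence they have the same norm.

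Item (v) is the heart of the matter. The identity $\varrho(\uu\oplus\vv,\uu\oplus\ww)=\varrho(\vv,\ww)$ unwinds to showing $\Vert-(\uu\oplus\vv)\oplus(\uu\oplus\ww)\Vert=\Vert-\vv\oplus\ww\Vert$. The algebraic fact behind this is the gyro-translation theorem: $-(\uu\oplus\vv)\oplus(\uu\oplus\ww)=\gyr[\uu,\vv]^{-1}(-\vv\oplus\ww)=\gyr[\vv,\uu](-\vv\oplus\ww)$ — this is a purely formal consequence of gyroassociativity (G3), the loop property (G5), and the inversion law $-(a\oplus b)=\gyr[a,b](-b\oplus-a)$, all available in any gyrocommutative gyrogroup. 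Since $\gyr[\vv,\uu]$ is orthogonal, taking norms gives the claim; applying $\tanh^{-1}$ gives the statement for $d$. I expect this step — correctly assembling the gyro-translation identity from the axioms — to be the main obstacle, as it requires careful bookkeeping of which gyroautomorphism appears where; one may instead cite the standard form of this identity from \cite{Un08}.

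With (v) in hand, (iv) reduces to the subadditivity of the norm under $\oplus$: writing $-\uu\oplus\ww=(-\uu\oplus\vv)\oplus\gyr[-\uu,\vv](-\vv\oplus\ww)$ via (G3), and using that $\gyr$ preserves the norm and that $\oplus$ on $\B$ satisfies the gyrotriangle inequality $\Vert\mathbf a\oplus\mathbf b\Vert\le\Vert\mathbf a\Vert\oplus\Vert\mathbf b\Vert$ (where the right-hand $\oplus$ is the scalar operation $s\oplus t=(s+t)/(1+st)$), we get $\varrho(\uu,\ww)\le\varrho(\uu,\vv)\oplus\varrho(\vv,\ww)$; then applying the increasing map $\tanh^{-1}$ and using the addition formula $\tanh^{-1}\!\big((s+t)/(1+st)\big)=\tanh^{-1}s+\tanh^{-1}t$ converts this to the ordinary triangle inequality $d(\uu,\ww)\le d(\uu,\vv)+d(\vv,\ww)$, which also establishes the remaining equivalences. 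Finally (vi): by definition $d(\0,r\cdot\ww)=\tanh^{-1}\Vert r\cdot\ww\Vert=\tanh^{-1}\big(\tanh(|r|\,\phi_\ww)\big)=|r|\,\phi_\ww=|r|\tanh^{-1}\Vert\ww\Vert=|r|\,d(\0,\ww)$, using the scalar-multiplication formula \eqref{Eq:scalar} and that $\Vert r\cdot\ww\Vert=|\tanh(r\phi_\ww)|=\tanh(|r|\phi_\ww)$ since $\tanh$ is odd and $\phi_\ww\ge0$.
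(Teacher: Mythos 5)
The paper offers no proof of this lemma at all --- it is stated as a collection of known facts about the Ungar gyrometric and the rapidity metric, implicitly quoted from \cite{Un08} --- so there is no argument of the author's to compare against. Your proposal supplies the standard derivation and is essentially correct: (i)--(iii) from nonnegativity of norms, left cancellation, gyrocommutativity plus the automorphic inverse property, and the fact that gyrations of $(\B,\oplus)$ are orthogonal; (v) from the gyrotranslation theorem; (iv) from the gyrotriangle inequality via the decomposition $-\uu\oplus\ww=(-\uu\oplus\vv)\oplus\gyr[-\uu,\vv](-\vv\oplus\ww)$ and the addition formula for $\tanh^{-1}$; (vi) by direct computation from \eqref{Eq:scalar}. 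Three small points. First, the gyrotranslation theorem reads $-(\uu\oplus\vv)\oplus(\uu\oplus\ww)=\gyr[\uu,\vv](-\vv\oplus\ww)$, not with the inverse gyration; since gyrations preserve the Euclidean norm this is immaterial for your conclusion, and you rightly flag the bookkeeping. Second, in (iv) you prove only the forward implications; the converse direction (the triangle inequality for $\varrho$ implies the gyrotriangle inequality for the norm) follows by specializing to the triple $-\uu,\ \0,\ \vv$, for which $\varrho(-\uu,\vv)=\Vert\uu\oplus\vv\Vert$, $\varrho(-\uu,\0)=\Vert\uu\Vert$ and $\varrho(\0,\vv)=\Vert\vv\Vert$ --- worth a sentence to close the equivalence chain. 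Third, the left-cancellation identity you use in (ii) rests on $\gyr[a,-a]=\mathrm{id}$, which indeed follows from (G4) and (G5) as you indicate, so that step is sound.
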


We establish the relation of metrics on gyrocommutative gyrogroups
under an injective homomorphism.

\begin{lemma} \label{L:preserve}
Let $(G_1, \oplus, \0)$ and $(G_2, \oplus, \0)$ be gyrocommutative
gyrogroups each equipped with a metric invariant under left
translations. If $\psi : G_1 \to G_2$ is an injective gyrogroup
homomorphism and if there exists $\kappa > 0$ such that
$d(\0,\psi(\uu)) \geq \kappa d(\0,\uu)$ for each $\uu \in G_1$, then
$d(\psi(\uu),\psi(\vv)) \geq \kappa d(\uu,\vv)$ for all $\uu,\vv \in
G_1$.
\end{lemma}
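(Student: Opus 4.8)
The plan is to transport the desired inequality down to distances measured from the identity, where the hypothesis can be applied directly; the metric's invariance under left translations does all the work. The only structural facts I need beyond the hypotheses are the basic cancellation properties available in every gyrogroup \cite{Un08}. First, from the left cancellation law one sees that a gyrogroup homomorphism must fix the identity: $\psi(\0)\oplus\psi(\0)=\psi(\0\oplus\0)=\psi(\0)=\psi(\0)\oplus\0$, so $\psi(\0)=\0$. Second, $\psi$ sends inverses to inverses: since $\psi(-\uu)\oplus\psi(\uu)=\psi\bigl((-\uu)\oplus\uu\bigr)=\psi(\0)=\0$ and inverses in a gyrogroup are unique, $\psi(-\uu)=-\psi(\uu)$. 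Combining these with the homomorphism identity $\psi(a\oplus b)=\psi(a)\oplus\psi(b)$ gives the key equation
\[
 \psi\bigl((-\uu)\oplus\vv\bigr)=(-\psi(\uu))\oplus\psi(\vv),\qquad \uu,\vv\in G_1.
\]

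Next I would unwind the left-invariance of the two metrics. Since the metric on each $G_i$ satisfies $d(w\oplus x,\,w\oplus y)=d(x,y)$, choosing $w=-x$ and using axiom (G2) gives
\[
 d(x,y)=d\bigl((-x)\oplus x,\,(-x)\oplus y\bigr)=d\bigl(\0,\,(-x)\oplus y\bigr),
\]
an identity valid in $G_1$ and in $G_2$ alike. In particular $d(\uu,\vv)=d\bigl(\0,\,(-\uu)\oplus\vv\bigr)$ in $G_1$, and $d\bigl(\psi(\uu),\psi(\vv)\bigr)=d\bigl(\0,\,(-\psi(\uu))\oplus\psi(\vv)\bigr)$ in $G_2$.

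Finally, I would chain these together and invoke the hypothesis on the element $\ww=(-\uu)\oplus\vv\in G_1$:
\[
 d\bigl(\psi(\uu),\psi(\vv)\bigr)=d\bigl(\0,\,(-\psi(\uu))\oplus\psi(\vv)\bigr)=d\bigl(\0,\,\psi((-\uu)\oplus\vv)\bigr)\geq \kappa\, d\bigl(\0,\,(-\uu)\oplus\vv\bigr)=\kappa\, d(\uu,\vv),
\]
where the outer equalities are the left-invariance identity (in $G_2$ and $G_1$ respectively), the middle equality is the homomorphism equation from the first paragraph, and the inequality is the assumed bound $d(\0,\psi(\ww))\geq\kappa\, d(\0,\ww)$. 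This is exactly the claim.

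I do not expect a genuine obstacle here: the whole content is the remark that, for a left-invariant metric on a (gyro)group-like object, a scaling estimate at the identity automatically propagates to all pairs of points. The only point requiring a little care — rather than being hard — is to use no more about $\psi$ than that it is a homomorphism: the facts $\psi(\0)=\0$ and $\psi(-\uu)=-\psi(\uu)$ must be derived from the cancellation laws, not presumed, and injectivity of $\psi$ is in fact not needed for the inequality itself (it is relevant only for the separate, implicitly intended observation that $(\uu,\vv)\mapsto d\bigl(\psi(\uu),\psi(\vv)\bigr)$ is then again a bona fide metric rather than merely a pseudometric).
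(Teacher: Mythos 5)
Your proof is correct and takes essentially the same route as the paper: both reduce the claim to the hypothesis at the identity by applying left-invariance of the two metrics to the element $\ww=(-\uu)\oplus\vv$. Your write-up is slightly more explicit (deriving $\psi(\0)=\0$ and $\psi(-\uu)=-\psi(\uu)$ from the axioms, and observing that injectivity is not actually used), but the argument is the same.
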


\begin{proof}
Let $\uu,\vv \in G_1$ and set $\ww = -\uu \oplus \vv \in G_1$. By
hypothesis $d(\0,\psi(\ww)) \geq \kappa d(\0,\ww)$. By invariance of
the metrics under left translation $d(0,\ww) = d(\uu \oplus \0,\uu
\oplus (-\uu \oplus \vv)) = d(\uu,\vv)$ and similarly
$$d(\psi(\0),\psi(\ww)) = d(\psi(\uu),\psi(\uu) \oplus \psi(-\uu \oplus \vv)) = d(\psi(\uu),\psi(\vv)).$$
We conclude that $d(\psi(\uu),\psi(\vv)) = d(\0,\psi(\ww)) \geq
\kappa d(\0,\ww) = \kappa d(\uu,\vv)$.
\end{proof}

\section{The trace metric for qubit density matrices}

Here we review the definition of the trace metric on the open convex
cone $\Omega$ of (Hermitian) positive definite matrices. The trace
metric on $\Omega$ is determined locally at the point $A$ by the
differential
\begin{displaymath}
\displaystyle d s = \Vert A^{-1/2} d A A^{-1/2} \Vert_{F},
\end{displaymath}
where $\Vert \cdot \Vert_{F}$ means the Frobenius or Hilbert-Schmidt
norm. This relation is a mnemonic for computing the length of a
differentiable path $\gamma : [a,b] \to \Omega$
\begin{displaymath}
\displaystyle L(\gamma) = \int_{a}^{b} \Vert \gamma^{-1/2}(t)
\gamma'(t) \gamma^{-1/2}(t) \Vert_{F} dt.
\end{displaymath}
Based on the above notion of length, we define the trace metric
$\delta$ between two points $A$ and $B$ in $\Omega$ as the infimum
of lengths of curves connecting them. That is,
\begin{center}
$\displaystyle \delta(A,B) : = \inf \{$ $L(\gamma) :$ $\gamma$ is a
differentiable path from $A$ to $B$ $\}$.
\end{center}

R.\ Bhatia and J.\ Holbrook  have established some basic properties
of the trace metric on $\Omega$ in \cite{BH}.
\begin{lemma} \label{L:Trmetric}
For any $A, B \in \Omega$ and any invertible matrix $X$,
\begin{itemize}
\item[(i)] $\displaystyle \delta(\Gamma_{X}(A), \Gamma_{X}(B)) = \delta(A, B)$, where $\Gamma_{X}$ is a congruence transformation by $X$.
\item[(ii)] $\displaystyle \delta(A, B) = \Vert \log{A^{-1/2} B A^{-1/2}} \Vert_{F}=\Vert \log(A^{-1}B)\Vert_F$.
\item[(iii)] $\displaystyle \delta(A^{1/2}, B^{1/2}) \leq \frac{1}{2} \delta(A, B)$.
\end{itemize}
\end{lemma}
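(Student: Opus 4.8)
The plan is to prove the three items in the order $(i)$, $(ii)$, $(iii)$, since the congruence invariance $(i)$ is exactly the device that normalizes $A$ to the identity in $(ii)$, and $(iii)$ is then a separate contraction estimate. Throughout write $\Gamma_{X}(A) = X^{*}AX$ and recall $\delta(A,B) = \inf_{\gamma} L(\gamma)$ over differentiable paths $\gamma$ from $A$ to $B$. For $(i)$, I would first check that the length element $\|A^{-1/2}\,dA\,A^{-1/2}\|_{F}$ is congruence invariant, which then passes to $L$ and, after taking infima, to $\delta$ (the assignment $\gamma \mapsto \Gamma_{X}\circ\gamma$ being a bijection between paths $A \to B$ and paths $\Gamma_{X}(A)\to\Gamma_{X}(B)$). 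Given $\gamma$, set $Y = \gamma^{1/2}$ and use the polar decomposition $YX = UP$ with $U$ unitary and $P = (X^{*}\gamma X)^{1/2}$; a direct manipulation gives $(X^{*}\gamma X)^{-1/2}(X^{*}\gamma' X)(X^{*}\gamma X)^{-1/2} = U^{*}(Y^{-1}\gamma' Y^{-1})U$, which has the same Frobenius norm as $\gamma^{-1/2}\gamma'\gamma^{-1/2}$ because $\|\cdot\|_{F}$ is unitarily invariant. Taking $X = A^{-1/2}$ then gives $\delta(A,B) = \delta(I, A^{-1/2}BA^{-1/2})$, reducing $(ii)$ to evaluating $\delta(I,P)$ for positive definite $P$.

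For $(ii)$, put $P = A^{-1/2}BA^{-1/2}$ and $K = \log P$. The upper bound $\delta(I,P) \le \|\log P\|_{F}$ comes from the path $t \mapsto e^{tK}$, $t\in[0,1]$, whose speed at each $t$ is the constant $\|e^{-tK/2}(Ke^{tK})e^{-tK/2}\|_{F} = \|K\|_{F}$, since $K$ commutes with $e^{tK}$. The lower bound is the crux: I would prove the ``exponential metric increasing'' property, namely that the Riemannian norm at $e^{H}$ of $D\exp(H)[X] = \int_{0}^{1}e^{sH}Xe^{(1-s)H}\,ds$ is at least $\|X\|_{F}$ for Hermitian $H, X$. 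Working in an eigenbasis of $H$ with eigenvalues $h_{i}$, conjugation by $e^{-H/2}$ turns $D\exp(H)[X]$ into the matrix with entries $X_{ij}\,(\sinh c_{ij})/c_{ij}$, $c_{ij} = \tfrac12(h_{i}-h_{j})$, and the elementary inequality $(\sinh c)/c \ge 1$ gives the claim entrywise, hence in $\|\cdot\|_{F}$. Consequently $\log$ is length-nonincreasing from $(\Omega,\delta)$ to $(\mathrm{Herm}(n),\|\cdot\|_{F})$, so any path $\gamma$ from $I$ to $P$ satisfies $L(\gamma) \ge L(\log\circ\gamma) \ge \|\log P\|_{F}$, the last step being that a straight segment minimizes Euclidean length. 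Hence $\delta(I,P) = \|\log P\|_{F}$ and, with $(i)$, $\delta(A,B) = \|\log(A^{-1/2}BA^{-1/2})\|_{F}$; the equality with $\|\log(A^{-1}B)\|_{F}$ holds because $A^{-1}B$ is similar to $A^{-1/2}BA^{-1/2}$, so the two logarithms have the same (real) spectrum and the Frobenius norm of the Hermitian one is determined by that spectrum.

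For $(iii)$, I would establish the stronger statement $\delta(A^{s},B^{s}) \le s\,\delta(A,B)$ for $s \in [0,1]$ and then put $s = 1/2$. It suffices that the power map $\Phi_{s}\colon A \mapsto A^{s}$ contract Riemannian speeds by the factor $s$. In an eigenbasis of $A$ with eigenvalues $a_{i} = e^{x_{i}}$, the Daleckii--Krein divided-difference formula gives $D\Phi_{s}(A)[X]$ entrywise, and conjugating by $A^{-s/2}$ turns the $(i,j)$ entry into the $(i,j)$ entry of $A^{-1/2}XA^{-1/2}$ times the factor $\dfrac{1}{s}\cdot\dfrac{\sinh(s u_{ij})}{\sinh u_{ij}}$ with $u_{ij} = \tfrac12|x_{i}-x_{j}|$; since $\sinh(su) \le s\sinh u$ for $u \ge 0$ and $s \in [0,1]$ (the function $s \mapsto \sinh(su)$ is convex and vanishes at $s=0$), this factor is at most $1$. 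Hence $L(\gamma^{s}) \le sL(\gamma)$ for every path $\gamma$, and taking infima gives $\delta(A^{s},B^{s}) \le s\,\delta(A,B)$. Alternatively, once one knows that $(\Omega,\delta)$ is a nonpositively curved metric space with geodesics $t \mapsto A^{1/2}(A^{-1/2}BA^{-1/2})^{t}A^{1/2}$, $(iii)$ is immediate: $A^{1/2}$ and $B^{1/2}$ are the midpoints of the geodesics from $I$ to $A$ and from $I$ to $B$, and $t \mapsto \delta(\sigma_{1}(t),\sigma_{2}(t))$ is convex for geodesics $\sigma_{1},\sigma_{2}$ issuing from a common point.

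The step I expect to be the real obstacle is the lower bound in $(ii)$: guessing the minimizing path is easy, but proving it optimal requires a nonpositive-curvature input, packaged here as $(\sinh c)/c \ge 1$ for the differential of $\exp$. The same elementary estimate, in the form $\sinh(su) \le s\sinh u$, drives $(iii)$, while $(i)$ and the upper bound in $(ii)$ are routine once the congruence invariance of the metric is in hand.
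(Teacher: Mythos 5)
The paper does not prove this lemma at all; it simply cites Bhatia--Holbrook \cite{BH} for these facts. Your proposal supplies a complete proof, and it follows the standard route of that reference (and of Bhatia's \emph{Positive Definite Matrices}): congruence invariance of the length element via polar decomposition, the exponential-metric-increasing inequality $(\sinh c)/c \geq 1$ for the lower bound in (ii), and the Daleckii--Krein divided-difference computation for the power-map contraction in (iii). All the key steps are sound. Two small points to tidy up. First, in (iii) the bookkeeping of the factor $s$ is slightly off as written: conjugating $D\Phi_{s}(A)[X]$ by $A^{-s/2}$ multiplies the $(i,j)$ entry of $A^{-1/2}XA^{-1/2}$ by $\sinh(su_{ij})/\sinh(u_{ij})$ (no prefactor $1/s$), and it is the bound $\sinh(su)/\sinh(u) \leq s$ that yields $L(\gamma^{s}) \leq s\,L(\gamma)$; with the $1/s$ inserted and the factor merely bounded by $1$ you would only get $L(\gamma^{s}) \leq L(\gamma)$. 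Second, the identity $\Vert \log(A^{-1}B)\Vert_{F} = \Vert \log(A^{-1/2}BA^{-1/2})\Vert_{F}$ is literally false if the left side means the Frobenius norm of the (generally non-Hermitian) matrix $\log(A^{-1}B)$; it must be read, as you in effect do, as the $\ell^{2}$ norm of the logarithms of the eigenvalues of $A^{-1}B$, which coincide with those of the Hermitian matrix $\log(A^{-1/2}BA^{-1/2})$. This is the standard convention in \cite{BH}, but it is worth stating explicitly.
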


We now consider the trace metric on the set $\D$ of all invertible
qubit density matrices, and then see the connection with the
rapidity metric on the open unit ball $\B$.

\begin{proposition} \label{P:trmetric}
For any $\uu, \vv \in \B$,
\begin{displaymath}
\delta(\rho_{\uu}, \rho_{\vv}) \leq \left\{ \ln^{2} \left(
\frac{x}{a} \right) + \ln^{2} \left( \frac{x}{b} \right)
\right\}^{1/2},
\end{displaymath}
where $\displaystyle x = \frac{1 - \uu^{T} \vv}{2}$, $\displaystyle
a = \frac{1}{4 \gamma_{\uu}^{2}}$, and $\displaystyle b = \frac{1}{4
\gamma_{\vv}^{2}}$.
\end{proposition}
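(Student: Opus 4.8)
The plan is to use part (ii) of Lemma~\ref{L:Trmetric} to turn $\delta(\rho_{\uu},\rho_{\vv})$ into an expression in the two eigenvalues of $\rho_{\uu}^{-1}\rho_{\vv}$, to compute their sum and product explicitly from Remarks~\ref{R:eigqubit} and~\ref{R:Trprod}, and then to reduce the stated estimate to a one-variable inequality about hyperbolic functions.

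First I would write $\delta(\rho_{\uu},\rho_{\vv}) = \Vert \log(\rho_{\uu}^{-1}\rho_{\vv})\Vert_{F}$; since $\rho_{\uu}^{-1}\rho_{\vv}$ is similar to the positive definite matrix $\rho_{\uu}^{-1/2}\rho_{\vv}\rho_{\uu}^{-1/2}$, it has two positive real eigenvalues $\mu_{1},\mu_{2}$ and $\delta(\rho_{\uu},\rho_{\vv})^{2} = \ln^{2}\mu_{1} + \ln^{2}\mu_{2}$. The next step identifies $\mu_{1}\mu_{2}$ and $\mu_{1}+\mu_{2}$. Multiplicativity of the determinant together with Remark~\ref{R:eigqubit} gives $\mu_{1}\mu_{2} = \det\rho_{\vv}/\det\rho_{\uu} = b/a$. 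For the sum, Cayley--Hamilton applied to the $2\times2$ matrix $\rho_{\uu}$ (using $\tr\rho_{\uu}=1$ and $\det\rho_{\uu}=a$) yields $\rho_{\uu}^{-1} = a^{-1}(I-\rho_{\uu})$, so by Remark~\ref{R:Trprod}, $\mu_{1}+\mu_{2} = \tr(\rho_{\uu}^{-1}\rho_{\vv}) = a^{-1}\bigl(\tr\rho_{\vv} - \tr(\rho_{\uu}\rho_{\vv})\bigr) = a^{-1}\cdot\tfrac{1-\uu^{T}\vv}{2} = x/a$.

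Next I would introduce the hyperbolic parametrization $\mu_{1} = \sqrt{b/a}\,e^{\theta}$, $\mu_{2} = \sqrt{b/a}\,e^{-\theta}$ with $\theta\ge 0$ (possible since $\mu_{1}\mu_{2}=b/a$), so that $\mu_{1}+\mu_{2}=x/a$ becomes $\cosh\theta = x/(2\sqrt{ab})$. A short computation then gives $\delta(\rho_{\uu},\rho_{\vv})^{2} = \ln^{2}\mu_{1}+\ln^{2}\mu_{2} = \tfrac12\ln^{2}(a/b) + 2\theta^{2}$. On the other hand, writing $\ln(x/a)$ and $\ln(x/b)$ as $\ln(x/\sqrt{ab}) \mp \tfrac12\ln(a/b)$ gives $\ln^{2}(x/a)+\ln^{2}(x/b) = \tfrac12\ln^{2}(a/b) + 2\ln^{2}(x/\sqrt{ab})$. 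Comparing the two, the proposition reduces to the single inequality $\theta \le \ln\bigl(x/\sqrt{ab}\bigr)$.

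Finally, since $\mu_{1},\mu_{2}>0$ we automatically have $\cosh\theta = x/(2\sqrt{ab})\ge 1$, hence $x/\sqrt{ab}\ge 2$ and the right-hand side is positive; and $e^{\theta} = \cosh\theta + \sinh\theta \le 2\cosh\theta = x/\sqrt{ab}$ for $\theta\ge 0$, which is exactly $\theta \le \ln(x/\sqrt{ab})$. I expect the only real subtlety to be spotting the parametrization that makes the $\tfrac12\ln^{2}(a/b)$ terms cancel on both sides; after that the remaining inequality is the elementary bound $e^{\theta}\le 2\cosh\theta$, and the needed positivity $x\ge 2\sqrt{ab}$ comes for free from positive definiteness of $\rho_{\uu}^{-1/2}\rho_{\vv}\rho_{\uu}^{-1/2}$ rather than from any computation with the vectors $\uu,\vv$.
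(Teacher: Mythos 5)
Your proof is correct, and it takes a genuinely different route from the paper's. The paper stays inside the gyrogroup framework: using Theorem \ref{T:iso} and the inverse formula it writes $\rho_{\uu}^{-1/2}\rho_{\vv}\rho_{\uu}^{-1/2}=K\,\rho_{-\uu\oplus\vv}$ with $K=x/a$, takes the eigenvalues $p$ and $1-p$ of the density matrix $\rho_{-\uu\oplus\vv}$, bounds $\ln^{2}p+\ln^{2}(1-p)\leq\ln^{2}\bigl(p(1-p)\bigr)$ using that $\ln p$ and $\ln(1-p)$ have the same sign, and then evaluates $Kp(1-p)$ via the gamma identity $\gamma_{\uu\oplus\vv}=\gamma_{\uu}\gamma_{\vv}(1+\uu^{T}\vv)$. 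You never invoke the Einstein addition at all: you reduce everything to the two spectral invariants $\det(\rho_{\uu}^{-1}\rho_{\vv})=b/a$ and $\tr(\rho_{\uu}^{-1}\rho_{\vv})=x/a$, the latter obtained by Cayley--Hamilton (your identity $\rho_{\uu}^{-1}=a^{-1}(I-\rho_{\uu})=4\gamma_{\uu}^{2}\rho_{-\uu}$ is the corrected form of (\ref{E:rho}), and it is the version the paper's own proof actually uses) together with Remark \ref{R:Trprod}, and the whole estimate collapses to the elementary bound $e^{\theta}\leq 2\cosh\theta$. The two final inequalities are equivalent after reparametrization, but your argument is more self-contained and makes the mechanism transparent --- the bound is essentially the statement that the largest eigenvalue of $\rho_{\uu}^{-1}\rho_{\vv}$ is at most its trace --- at the cost of losing the explicit appearance of $\rho_{-\uu\oplus\vv}$, which is what ties the paper's computation to the rapidity metric in the next theorem.
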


\begin{proof}
By the equation (\ref{E:rho}) and Theorem \ref{T:iso} we have
\begin{displaymath}
\begin{split}
\displaystyle \rho_{\uu}^{-1/2} \rho_{\vv} \rho_{\uu}^{-1/2} & = (2 \gamma_{\uu})^{2} \rho_{-\uu}^{1/2} \rho_{\vv} \rho_{-\uu}^{1/2} \\
& = 4 \gamma_{\uu}^{2} \left( \frac{1 - \uu^{T} \vv}{2} \right)
\rho_{-\uu \oplus \vv}.
\end{split}
\end{displaymath}
We set $\displaystyle K = 4 \gamma_{\uu}^{2} \left( \frac{1 -
\uu^{T} \vv}{2} \right)$ and $\displaystyle p = \frac{1}{2} (1 +
\Vert -\uu \oplus \vv \Vert)$, where $p$ and $1-p$ are the
eigenvalues of $\rho_{-\uu \oplus \vv}$. Then
\begin{displaymath}
\begin{split}
\displaystyle \delta(\rho_{\uu}, \rho_{\vv})^{2} & = \Vert \log{\rho_{\uu}^{-1/2} \rho_{\vv} \rho_{\uu}^{-1/2}} \Vert_{F}^{2} \\
& = \ln^{2}(K p) + \ln^{2} (K(1-p)) \\
& = 2 \ln^{2}{K} + 2 \ln{K} \{ \ln{p} + \ln{(1-p)} \} + \ln^{2}{p} + \ln^{2}{(1-p)} \\
& \leq 2 \ln^{2}{K} + 2 \ln{K} \ln{(p(1-p))} + \ln^{2}{(p(1-p))} \\
& = \ln^{2}{K} + \{ \ln{K} + \ln{(p(1-p))} \}^{2} \\
& = \ln^{2}{K} + \ln^{2}{(K p(1-p))} \\
& = \ln^{2} \left( \frac{x}{a} \right) + \ln^{2} \left( \frac{x}{b}
\right).
\end{split}
\end{displaymath}
The inequality holds since $\ln{p} < 0$ whenever $0 < p < 1$. The
last equality follows from the property $\gamma_{\uu \oplus \vv} =
\gamma_{\uu} \gamma_{\vv} (1 + \uu^{T} \vv)$. Indeed,
\begin{displaymath}
\begin{split}
\displaystyle K p(1-p) & = 4 \gamma_{\uu}^{2} \left( \frac{1 - \uu^{T} \vv}{2} \right) \left( \frac{1 - \Vert -\uu \oplus \vv \Vert^{2}}{4} \right) \\
& = 4 \gamma_{\uu}^{2} \left( \frac{1 - \uu^{T} \vv}{2} \right) \frac{1}{4 \gamma_{-\uu \oplus \vv}^{2}} \\
& = \frac{1}{2 \gamma_{\vv}^{2} (1 - \uu^{T} \vv)} \\
& = \frac{x}{b}.
\end{split}
\end{displaymath}
\end{proof}

\begin{remark}
From Remark \ref{R:Trprod} and Remark \ref{R:eigqubit}, we see that
\begin{displaymath}
\begin{split}
\displaystyle x = \frac{1 - \uu^{T} \vv}{2} & = \tr{ (\rho_{-\uu} \rho_{\vv}) } \\
a = \frac{1}{4 \gamma_{\uu}^{2}} = \det{\rho_{\uu}} \textrm{ and } &
b = \frac{1}{4 \gamma_{\vv}^{2}} = \det{\rho_{\vv}}.
\end{split}
\end{displaymath}
\end{remark}

\begin{theorem} \label{T:Dis}
For any $\uu, \vv \in \B$,
\begin{displaymath}
d(\uu, \vv) \leq \frac{1}{\sqrt{2}} \delta(\rho_{\uu}, \rho_{\vv}).
\end{displaymath}
\end{theorem}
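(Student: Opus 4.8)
The plan is to reduce the inequality to the elementary fact $(s-t)^{2}\le 2(s^{2}+t^{2})$, applied to the two logarithmic eigenvalue terms that already appear in the proof of Proposition~\ref{P:trmetric}.

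First I would put the left-hand side in logarithmic form. Since $d(\uu,\vv)=\tanh^{-1}\varrho(\uu,\vv)=\tanh^{-1}\Vert-\uu\oplus\vv\Vert$ and $\tanh^{-1}r=\frac{1}{2}\ln\frac{1+r}{1-r}$ for $0\le r<1$, writing $r=\Vert-\uu\oplus\vv\Vert$ and $p=\frac{1}{2}(1+r)$ — so that $1-p=\frac{1}{2}(1-r)$, and by Remark~\ref{R:eigqubit} the numbers $p$ and $1-p$ are exactly the eigenvalues of $\rho_{-\uu\oplus\vv}$ — one gets
\begin{displaymath}
d(\uu,\vv)=\frac{1}{2}\ln\frac{p}{1-p}=\frac{1}{2}\bigl(\ln(Kp)-\ln(K(1-p))\bigr)
\end{displaymath}
for any constant $K>0$. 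Next I would recall from the proof of Proposition~\ref{P:trmetric} that, taking $K=4\gamma_{\uu}^{2}\cdot\frac{1-\uu^{T}\vv}{2}>0$, the positive definite matrix $\rho_{\uu}^{-1/2}\rho_{\vv}\rho_{\uu}^{-1/2}$ equals $K\rho_{-\uu\oplus\vv}$, hence has eigenvalues $Kp$ and $K(1-p)$, so that by Lemma~\ref{L:Trmetric}(ii)
\begin{displaymath}
\delta(\rho_{\uu},\rho_{\vv})^{2}=\Vert\log(\rho_{\uu}^{-1/2}\rho_{\vv}\rho_{\uu}^{-1/2})\Vert_{F}^{2}=\ln^{2}(Kp)+\ln^{2}(K(1-p)).
\end{displaymath}
(This is the genuine value of $\delta^{2}$; the inequality stated in Proposition~\ref{P:trmetric} is obtained only afterwards, so there is no harm in starting the estimate from this identity.)

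Finally, setting $s=\ln(Kp)$ and $t=\ln(K(1-p))$, the two displays become $d(\uu,\vv)^{2}=\frac{1}{4}(s-t)^{2}$ and $\delta(\rho_{\uu},\rho_{\vv})^{2}=s^{2}+t^{2}$, so the assertion $d(\uu,\vv)^{2}\le\frac{1}{2}\delta(\rho_{\uu},\rho_{\vv})^{2}$ is precisely $(s-t)^{2}\le 2(s^{2}+t^{2})$, which holds since $-2st\le s^{2}+t^{2}$; taking square roots yields $d(\uu,\vv)\le\frac{1}{\sqrt{2}}\delta(\rho_{\uu},\rho_{\vv})$. I do not anticipate a real obstacle: the only points needing a word are that $K>0$ and $0<p<1$, which legitimize the logarithms and the eigenvalue identification, and both follow at once from $\uu,\vv\in\B$ and $-\uu\oplus\vv\in\B$.
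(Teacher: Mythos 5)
Your proof is correct, and it takes a genuinely more direct route than the paper's. The paper first reduces to the base point: it checks that the rapidity metric on $\B$ and the trace metric on $\D$ are both invariant under left translations (Lemma \ref{L:metric}(v), Lemma \ref{L:Trmetric}(i)), proves the inequality only at the origin in the form $\delta((1/2)I,\rho_{\uu})^{2}=\ln^{2}(1+\Vert\uu\Vert)+\ln^{2}(1-\Vert\uu\Vert)\ge\tfrac{1}{2}\ln^{2}\tfrac{1+\Vert\uu\Vert}{1-\Vert\uu\Vert}=2\,d(\0,\uu)^{2}$, and then transports it to arbitrary pairs via the general Lemma \ref{L:preserve}. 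You instead work at a general pair $(\uu,\vv)$ from the start, using the identity $\rho_{\uu}^{-1/2}\rho_{\vv}\rho_{\uu}^{-1/2}=K\rho_{-\uu\oplus\vv}$ already established in the proof of Proposition \ref{P:trmetric}, and observe that the arbitrary positive factor $K$ cancels in the difference $\ln(Kp)-\ln(K(1-p))$ that computes $d(\uu,\vv)$. The elementary inequality at the core is the same in both arguments, namely $s^{2}+t^{2}\ge\tfrac{1}{2}(s-t)^{2}$ applied to the logarithms of the two eigenvalues. What the paper's detour buys is conceptual reusability: Lemma \ref{L:preserve} isolates the general principle that a pointed bound $d(\0,\psi(\uu))\ge\kappa\,d(\0,\uu)$ propagates to all pairs for any injective homomorphism between gyrogroups with left-invariant metrics. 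What your route buys is economy: you need neither Lemma \ref{L:preserve} nor the invariance statements, only the conjugation identity, and the cancellation of $K$ makes transparent why the base-point case already contains the whole content. Both proofs are complete; yours is a legitimate and arguably cleaner alternative.
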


\begin{proof}
We have seen in Remark \ref{R:Dgyro} that $(\D, \odot)$ is a
gyrocommutative gyrogroup via an isomorphism
\begin{displaymath}
\rho : (\B, \oplus) \to (\D, \odot).
\end{displaymath}
By Lemma \ref{L:metric}(v) and Lemma \ref{L:Trmetric}(i) we have the
rapidity metric on $\B$ and the trace metric on $\D$ are both
invariant under left translations. By Lemma \ref{L:Trmetric}(ii) and
Remark \ref{R:eigqubit}
\begin{displaymath}
\begin{split}
\displaystyle \delta((1/2)I, \rho_{\uu})^{2} & = \Vert \log{(2 \rho_{\uu})} \Vert_{F}^{2} \\
& = \ln^{2}{(1 + \Vert \uu \Vert)} + \ln^{2}{(1 - \Vert \uu \Vert)} \\
& \geq \frac{1}{2} \ln^{2}{\frac{1 + \Vert \uu \Vert}{1 - \Vert \uu \Vert}} \\
& = \frac{1}{2} (2 \tanh^{-1} \Vert \uu \Vert)^{2} = 2
d(\0,\uu)^{2}.
\end{split}
\end{displaymath}
So $\delta((1/2)I, \rho_{\uu}) \geq \sqrt{2} d(\0,\uu)$. Therefore,
by Lemma \ref{L:preserve} we conclude
\begin{displaymath}
\delta(\rho_{\uu}, \rho_{\vv}) \geq \sqrt{2} d(\uu,\vv).
\end{displaymath}
\end{proof}

\end{document}